\documentclass[USenglish,oneside,twocolumn]{article}
\usepackage[utf8]{inputenc}
\usepackage{amsfonts,amsthm,amsmath,amssymb}
\usepackage{hyperref}
\usepackage{framed}
\usepackage{array}
\usepackage{epsfig}
\usepackage{tikz}
\usepackage{enumitem} 
\usepackage{url}
\usetikzlibrary{positioning}

\newcommand{\ignore}[1]{}

\newtheorem{theorem}{Theorem}
\newtheorem{definition}[theorem]{Definition}

  \author{Saba Eskandarian, Eran Messeri, Joseph Bonneau, Dan Boneh}
  \title{\textbf{Certificate Transparency with Privacy}}
  
\begin{document}
\maketitle
  \begin{abstract}
{Certificate transparency (CT) is an elegant mechanism designed to
detect when a certificate authority (CA) has issued a certificate
incorrectly.  Many CAs now support CT and it is being actively
deployed in browsers.  However, a number of privacy-related challenges remain.
In this paper we propose
practical solutions to two issues. First, we develop a mechanism that
enables web browsers to audit a CT log without violating user privacy. Second, we 
extend CT to support non-public subdomains. 
}
\end{abstract} 

\section{Introduction}

There are many documented cases in which Certificate Authorities (CAs)
have issued certificates incorrectly. While DigiNotar and Comodo are
among the most well publicized examples~\cite{DigiNotar,Comodo},
misissuance happens several times a year~\cite{WoSign} and can enable
active man-in-the-middle (MITM) attacks on a large population of users.  For
example, misissuance of an {\tt example.com} certificate can lead to
an MITM on all {\tt example.com} traffic, unless defenses are deployed.

In response, several defenses have been proposed~\cite{CO13}, such as key
pinning~\cite{HPKP,TACK,eckersley2011sovereign,kim2013accountable,basin2014arpki}, DANE~\cite{DANE}, Perspectives~\cite{perspectives}, country-specific restrictions~\cite{kasten2013cage} and others.  Certificate Transparency (CT)~\cite{RFC} 
provides an elegant defense that is being actively deployed by web browser vendors and CAs.\\

\noindent \textbf{A brief overview of CT}\\ The goal of Certificate Transparency is to make all issued 
certificates publicly visible. By inspecting the set of issued certificates, domain owners can identify a certificate issued for 
their domain without permission and ensure that it is revoked.
To accomplish this, CT uses a set of public, untrusted,
append-only log servers that track and publish all
certificates issued by CAs.
Eventually, major browsers will only trust a certificate if it comes with a 
proof that it has been recorded in a public log. 
A certificate without such a proof will be treated as invalid.
This will effectively force all CAs to register every issued certificate 
with one or more CT logs.

Once CT is enforced by browsers, 
a rogue certificate for {\tt example.com} that is used in an attack 
must appear on one of the public logs.
This enables administrators at
{\tt example.com}, or an agent monitoring the logs on their behalf,
to detect the rogue certificate and revoke it. 
CT makes it possible to detect misissuance.
The task of investigating
and possibly penalizing negligent CAs is handled outside of the protocol.

To explain our work, we must first briefly review how CT works.  
CT adds an additional step to certificate issuance.  
When a CA wishes to issue a certificate for, say, {\tt example.com}, 
it chooses a public CT log and sends the certificate data, called a precertificate, to the log.  
The log uses a secret signing key to generate a
 {\em Signed Certificate Timestamp}, or SCT, which acts as a promise 
that the log will add the certificate to its public log within a specified 
period of time, called the {\em maximum merge delay} (MMD, usually 24 hours). 
\ignore{Figure~\ref{fig:sct} shows an example SCT.}
This SCT is sent back to the CA, and the CA typically embeds it in the final
signed certificate as an X.509 extension.

Two types of entities ensure that CAs and log servers properly follow CT procedures:
\begin{itemize}
\item {\bf Monitors} check for suspicious certificates in logs by downloading and reviewing all log entries.

\item {\bf Auditors} verify that logs are behaving correctly based on their partial views of logs and check that SCTs they encounter appear in logs.
\end{itemize}
Note that auditors can produce irrefutable cryptographic evidence of misbehavior by logs (such as attempting to delete log entries after inserting them), whereas monitors search for examples of misissued certificates which cannot be conclusively proven to indicate misbehaviour.
It is assumed that large organizations such as web hosting providers, content-delivery networks or CAs themselves will perform the role of monitors on behalf of their clients.
Large organizations such as Google or independent watchdogs will act as auditors to ensure logs are generally well-formed and updated properly.

In addition, web browsers can function as part-time auditors, periodically checking with logs for proof that 
SCTs for certificates from sites the browser has visited were indeed logged as promised.
Every log server stores its data in a Merkle tree, where
each leaf is a hash of a log entry.  This enables the log to
efficiently prove to the auditor that the tested certificate is recorded in
the log data using a standard Merkle proof-of-inclusion.  Auditors obtain an up to
date Merkle tree root, called a {\em Signed Tree Head (STH)}, via a
broadcast/gossip mechanism that we do not discuss here. \\

\ignore{
While CT is primarily designed for managing certificates, its
capabilities have found other applications.  For example, Mozilla
announced that all its software updates will be logged in CT, and the
Firefox browser will refuse to install an update, unless it can verify
that it has been logged on at least two CT log servers.  This ensures that one
cannot force Mozilla to covertly create a malicious software update
that will be deployed against a single targeted user.\\
}
\ignore{
\begin{figure}\centering
\fbox{
\parbox{5.9cm}{
\textbf{Version:} V1\\[2mm]
\textbf{LogID:} \\A4 B9 09 B4 18 58 14 87 BB 13 A2 CC 67 70 0A 3C 35 98 04 F9 1B DF B8 E3 77 CD 0E C8 0D DC 10 (Google US1 CT)\\[2mm]
\textbf{Timestamp:} \\Friday, November 1, 2016 8:37:28 PM \\[2mm]
\textbf{Extensions:} null \\[2mm]
\textbf{Signature} over all fields, including precertificate fields, but
  excluding LogID
}
}
\caption{Sample SCT embedded in a certificate.}
\label{fig:sct}
\end{figure}
}
\noindent \textbf{Privacy challenges with CT}\\
While CT provides a strong defense against misissuance, several privacy challenges are not addressed by the current
design and may hinder wide adoption.

First, CT auditing can violate users' browsing privacy.  Recall that when
an auditor, such as a web browser, encounters a valid SCT, it 
should check that the corresponding certificate is properly recorded on the
designated log server.  If the certificate is missing
from the log or, more precisely, if the log server fails to prove
inclusion, the browser must then publicize the SCT to indicate 
possible misbehavior by the log
server. A natural choice is to report the SCT 
to the browser's vendor (e.g., Google in the case of Chrome).
The vendor will need to investigate the log
and potentially remove the misbehaving log server from the
browser's list of trusted logs.

We note that this inclusion check by the auditor is primarily needed
for certificates that are not publicly accessible.  For public certificates, 
large auditors like Google
can check certificate inclusion in the log by themselves.

The problem with this approach is that it violates user privacy:
the browser must
send the offending SCT to the verifier, thereby revealing the user's
browsing behavior to the verifier.  This is especially troubling
considering that this auditing mechanism is primarily applied to
non-public sites that the user visits.  Ideally, we should enable log
server auditing without violating user privacy. 
Although, in principle, Tor could be 
used in this situation, we seek a solution that does not rely 
on external infrastructure. There are also situations where it is important to hide the certificate that has been excluded independent of whether the identity of the reporter is known.

\medskip
The second difficulty with CT is that it is currently incompatible with \textit{private} subdomains.
Consider an enterprise that does not want to reveal the domains of its
internal servers to the public.  However, the enterprise wishes to use
a public CA to issue certificates for its internal subdomains (or to log its privately issued certificates in a public log).
Because domain names are publicly available in the CT log, logging certificates will reveal the servers' private domains.

Certificates issued from private CAs (root CAs not trusted by default but manually added to browsers) are exempt from CT as a workaround to this privacy problem.
That is, browsers forgo the SCT requirement for certificate chains ending in a manually-installed root (a necessity to enable enterprise data loss prevention tools
to inspect HTTPS traffic).

However, this is unsatisfactory for two reasons.
First, installing a private root on all browsers is a major configuration burden.
More importantly, administering an internal CA introduces a significant security risk---if the internal CA is compromised then all client traffic may be eavesdropped.
Organizations may wisely wish to forgo this risk and rely on an external CA.

\ignore{
would be exempt from CT, but many dread the complexity
of running a CA and would rather use a public CA.  Alternatively, the
enterprise could ask the public CA to issue a certificate {\em
  without} an SCT and then manually install the certificate on all
employee machines.    However, by using an SCT-less certificate,
the enterprise is giving up on the benefits of CT.  This solution also adds
management complexity by requiring manual certificate installation on
every user device.  The question is whether there is a better solution
that simultaneously protects corporate privacy and permits certificate
transparency.
}

\subsection{Our Contributions}

We develop practical solutions to both challenges discussed above which can be
implemented efficiently on top of the existing CT
specification~\cite{RFC}.

\noindent \textbf{Privacy-preserving proofs of misbehavior}\\
We develop an efficient zero-knowledge protocol that enables an
auditor, such as a web browser, to prove to its vendor (say, Google)
that it has a valid SCT, properly signed by the log server, and yet
the log omits the corresponding certificate. 
Specifically, the auditor proves in zero-knowledge that it has a
valid SCT, as well as a valid proof of non-inclusion in the log's
Merkle tree.  This proves log misbehvaior.
The vendor learns nothing, other than the fact that log
integrity has failed.  

The zero-knowledge proof is about 330\;KB and takes approximately five
seconds to generate and three seconds to verify.  This overhead
is tolerable, given that this mechanism is used infrequently.  We also
consider a variant that provides weaker privacy guarantees, but
effectively revokes the missing certificate without revealing it to
the verifier. Our
construction is presented in Section~\ref{sec:ZK}, where we discuss
several practical considerations in its application.

\medskip \noindent \textbf{CT for private subdomains}\\
We give a complete solution that makes CT fully compatible with private
subdomains.
Our construction makes it possible for enterprises to use a public CA 
that issues standard SCTs without revealing any information about the names of internal domains.
Our construction, presented in Section~\ref{sec:privateCT}, 
uses only commitments and has very low overhead.

\medskip \noindent \textbf{Short-lived certificates}\\
We also consider the implications of using CT in conjunction with
short-lived certificates, which are certificates that are valid for
only one day~\cite{TSH+12}.  The two ideas, CT and short-lived
certificates, seem incompatible: the large number of short-lived
certificates would overwhelm the log servers.  In
Section~\ref{sec:shortLived} we provide a simple solution, using Merkle trees, that resolves this tension.

\section{Zero Knowledge Proof of Exclusion}
\label{sec:ZK}

We start by showing how to audit CT
logs without violating user privacy.  
Specifically, we show how a web browser can
construct an efficient zero-knowledge proof that a log server has
issued a valid SCT but that the corresponding certificate has not been
entered into its Certificate Transparency (CT) log.  This proves that
log integrity has failed without revealing any information about
the user's browsing behavior.  

Throughout the section we use the following terminology:
\begin{itemize}
\item {\bf Log:} The entity managing a CT log and issuing SCTs that
  must be inserted into the log.

\item {\bf Prover:} An auditor, such as a web browser, that obtained
  a certificate and a SCT by visiting a site, but the certificate
  data is missing from the designated log.

\item {\bf Verifier:} An authority that wants to learn that a log is
  misbehaving without learning the prover's browsing behavior.
  That is, without learning the offending SCT. This authority has the
  power to investigate the log to determine what went wrong, and
  to potentially take steps to revoke the log.
\end{itemize}

More precisely, a log entry is a tuple $(\textit{data}, I, T)$, where
$I$ and $T$ are 64-bit integers representing the index of the entry
among the leaves of the Merkle tree and its timestamp respectively.
$\textit{data}$ represents the rest of the contents of the log
entry (e.g. an X509 certificate). Recall that log entries make up the
leaves of a Merkle tree, sorted by their timestamp~$T$. An SCT is the
pair $(\textit{data}, T)$, where $\textit{data}$ contains the log's
signature on the domain name and other certificate data.  When $x$
represents a log entry or an SCT, we use $T_x$ to refer to the
timestamp of that entry and $I_x$ to refer to that entry's index in 
the leaves of the log's Merkle tree.

In our system, the prover proves that it holds an SCT~$y$ whose
timestamp $T_y$ falls between the timestamps of two adjacent log
entries~$x$ and~$z$ in the Merkle tree. This proves that the entry~$y$
is missing from the log because it would otherwise appear between~$x$
and~$z$. The proof should reveal nothing to the verifier beyond the
fact that some SCT is missing from the log.

\subsection{Privacy Goals and Limitations}

We begin by discussing the threat model, and the level of
privacy to be expected from a solution. 

Recall that the prover wishes to prove to a verifier that
it holds an SCT $y$ whose timestamp falls between two neighboring log
entries $x$ and $z$ without revealing $x$, $y$, or $z$. While it is
clear why revealing the missing SCT $y$ would reveal information about
the prover's web browsing, it may not immediately be obvious why $x$
and $z$ should also be hidden. The reason is that knowledge of the
timestamps of $x$ and $z$ would allow the verifier to learn what users
have visited some particular suspected site. The verifier simply visits that
site and checks if its SCT timestamp does indeed fall between the
$x$ and $z$ reported by the prover. We emphasize that it is important
not only to prevent leaking information about the prover's browsing
habits but also to avoid giving the verifier information it could use
on its own to uncover the sites visited by the prover.

Suppose that a log only records SCTs for publicly accessible domains,
and that only a single SCT is missing from the log, say belonging to
domain $D$. Once the prover sends the verifier its zero-knowledge
proof that the log is missing some SCT, the verifier could do an
exhaustive search of all public SCTs and learn that the SCT for domain $D$
has been excluded. This reveals to the verifier that the prover
visited domain $D$. This privacy limitation is inherent to CT.

Our approach is primarily used to preserve privacy when a prover (web
browser) audits a private domain that is not publicly accessible.  It
also provides privacy in case the log drops many SCTs from its log
data.

\medskip
An alternative way to protect the privacy of the prover's web behavior
is to have the prover submit its accusation to the verifier over
Tor. Our solution avoids relying on external infrastructure like Tor,
and does not rely on non-collusion assumptions that Tor requires. Moreover, there are situations where it is important not only to hide the identity of the reporter but also the identity of the site being visited. Consider for example if the excluded SCT corresponds to a private domain. Sending the missing certificate to a verifier, even if it could be done without revealing the identity of the reporter, still reveals the private domain to the verifier. This means that private reporting of exclusions is important not only to protect the web history of a user but also to protect private domains on CT logs. 

\medskip
Recall that there is a period of time, the {\em maximum merge delay}
(MMD), after an SCT is issued and before it is required to be present
in the log.  One concern is that this may cause ``false positives''
where a certificate with a valid SCT can be shown to be excluded from
the log before the MMD period has elapsed. As we will see, this 
does not affect our solution because the prover must present a log entry
that has been added after the MMD has elapsed. 

\medskip
\noindent \textbf{Actionable evidence}\\
Our zero-knowledge proof hides, by design, all information about the
log's infraction, except the fact that some SCT is missing.  This may
make a subsequent investigation of the log server more
challenging. Once a log has been shown to have excluded SCTs, it must
launch an internal investigation to determine the source of the
problem and satisfy investigators that its practices are adequate. At
the very least, the log can conclude that it has been compromised and make
sure to change its signing keys for future operation.

In Section~\ref{sec:actionable} we present an alternate approach that,
in addition to proving log-exclusion, lets other browsers treat
the missing certificate as invalid.  
This effectively revokes the offending certificate while revealing
nothing to the verifier.

\subsection{Preliminaries}
Our construction requires the following additional components in the CT protocol:\\

\noindent \textbf{Additional signatures:} Each log entry $x$ is accompanied by signed messages $$\textit{Sign}_{k_H}(H(x)),\ \ 
  \textit{Sign}_{k_T}(T_x+H(x)),\ \ 
  \textit{Sign}_{k_I}(I_x+H(x)),$$ 
where $k_H, k_T,$ and $k_I$ are different signing keys. Each SCT $y$ is accompanied only by one signed message $\textit{Sign}_{k_T}(T_x+H(x))$. In total, this requires 4 additional signed messages to be distributed by the log for each certificate: 1 signature in each SCT and 3 signatures in each log entry.\\

\noindent \textbf{Ordered log entries:} Monitors must make sure that the ordering of index numbers corresponds to the order of timestamps in log entries. That is, logs must ensure that if $I_x>I_y$, then it must hold that $T_x>T_y$. This ensures that the logs fulfill our definition of being well-formed and contrasts with the current setting where indexes can be assigned to queued entries in any order in a sequencing phase after SCTs are distributed.\\

\noindent
We also need the following specialized primitives:\\

\noindent \textbf{Commitments:} We need a commitment scheme that is additively homomorphic and supports efficient zero-knowledge equality tests and range proofs. We informally define a commitment with binding and hiding properties as follows:
\begin{itemize}
\item[--] \textbf{Binding:} Given a commitment $C_m$ to message $m$, it is computationally hard to decommit it to any message $m'\neq m$.
\item[--] \textbf{Hiding:} It is computationally hard to determine the message $m$ given only commitment $C_m$.
\end{itemize}

An additively homomorphic commitment scheme is one which allows for addition of two values while both are under commitments. Equality tests check whether the values of two commitments are equal, and range proofs prove that the value of a commitment falls within some range (or in our case checks that the value is greater than zero). Many practical schemes and protocols with these properties exist \cite{Ped92, DF01, CM01, Bou00, DBB+15, CP93, Sch91, Bra97, CEvdG87, CCS08}.\\

\noindent \textbf{Signatures:} We also require a signature scheme with efficient proofs of 
knowledge of a signature:
\begin{definition}[Proof of Knowledge of a Signature]  \label{def:sigcom}
A Proof of Knowledge of a signature is a zero knowledge proof that
$(pk', c_1', c_2')\in P$, where $P$ is defined as the language of
triples $(pk, c_1, c_2)$ where $c_1$ and $c_2$ are commitments to $m$
and $\textit{Sign}_{pk}(m)$, respectively, for some message $m$. 
Moreover, there exists an extractor $E$ that, given black box access to the prover, can extract the message $m$ and signature $\textit{Sign}_{pk}(m)$.
\end{definition}
The following signature schemes have efficient proofs of knowledge of a signature:
\begin{itemize}
\item CL02 \cite{CL02} signatures, based on RSA
\item BBS \cite{BBS04} signatures, based on pairings, can be easily modified, as described Camenisch and Lysyanskaya \cite{CL04}, to provide efficient proofs of knowledge of signatures
\item CL04 \cite{CL04} signatures, also based on pairings
\end{itemize}

\noindent \textbf{Hash function:} Finally, we will use a near collision resistant hash function, defined as follows:

\begin{definition}[$\delta$-near collision resistant hash function]
A hash function $H: \{0,1\}^*\rightarrow Z_n$ is $\delta$-near collision resistant if it is hard to find $x,y\in\{0,1\}^*$ such that $|H(x)-H(y)|<\delta$.
\end{definition}

Specifically we will assume $H: \{0,1\}^*\rightarrow Z_{2^\lambda}$ is a $\textit{poly}(\lambda)$-near collision resistant hash function, where $\lambda$ is a security parameter and $\textit{poly}(\lambda)$ denotes some polynomial function of $\lambda$.

Although a stronger assumption than collision-resistance, it appears that most hash functions conjectured to be collision-resistant also satisfy $\textit{poly}(n)$-near collision resistance. If $H$ is modeled as a random oracle, then $H$ has $\textit{poly}(n)$-collision resistance since the probability that two random points in $\{0,1\}^n$ fall within $\textit{poly}(n)$ of each other is $\frac{\textit{poly}(n)}{2^n}$.

\subsection{Construction}
Protocol $\Pi$ is between a prover and a verifier with capabilities to somehow punish a misbehaving log. The zero knowledge proof is in two parts. In the first part (summarized in figure \ref{picture}), the verifier receives commitments $C_{T_x}$, $C_{I_x}$, $C_{T_y}$, $C_{T_z}$, and $C_{I_z}$ to the indexes $I$ and timestamps $T$ of log entries $x, z$ and to the timestamp of the SCT $y$. The verifier receives additional commitments described below, including a commitment to $H(x)$. The second part proves that the indexes $I_x$ and $I_z$ are adjacent and that the timestamps are ordered such that $T_x<T_y<T_z$.

The first part uses the signed messages accompanying a log entry to both verify that the commitments given by the prover are legitimate and to prove that each $(\text{index}, \text{timestamp})$ pair corresponds to the same log entry. Using the additive homomorphic properties of the commitment, both $I_x$ and $T_x$ are added to $H(x)$, giving commitments to $I_x+H(x)$ and $T_x+H(x)$. The prover then proves in zero knowledge that the values in these commitments are properly signed by the log. 

To complete the actual proof of omission, the prover sends and reveals a commitment $C_1$ to the number $1$ and proves in zero knowledge that $C_{I_x+1}$ and $C_{I_z}$ are commitments to the same value. Next, the verifier computes commitments $C_{p_1}=C_{T_z-T_y}$ and $C_{p_2}=C_{T_y-T_x}$. Finally the prover proves in zero knowledge that $p_1>0$ and $p_2>0$. 

The complete protocol $\Pi$ is as follows:

\begin{enumerate}
\item Prover sends commitments $C_{\textit{Sign}_{k_I}(I_x+H(x))}$, $C_{\textit{Sign}_{k_H}(H(x))}$, $C_{\textit{Sign}_{k_T}(T_x+H(x))}$, $C_{\textit{Sign}_{k_T}(T_y+H(y))}$, $C_{\textit{Sign}_{k_I}(I_z+H(z))}$, $C_{\textit{Sign}_{k_H}(H(z))}$, $C_{\textit{Sign}_{k_T}(T_z+H(z))}$, $C_{T_x}$, $C_{H(x)}$, $C_{I_x}$, $C_{H(y)}$, $C_{T_y}$, $C_{T_z}$, $C_{H(z)}$, $C_{I_z}$, $C_1$, and $r_{C_1}$.

\item Verifier computes $C_{I_x+H(x)}$, $C_{T_x+H(x)}$, $C_{T_y+H(y)}$, $C_{I_z+H(z)}$, and $C_{T_z+H(z)}$

\item Prover proves in zero knowledge that each commitment to a signature in step~1 corresponds to the correct value, as shown in figure \ref{picture}. That is, the prover verifies a signature under a commitment for each edge labeled ``verify'' in figure \ref{picture}.  Here we are using three zero-knowledege proofs for the language $(pk', c_1', c_2')$ where $c_2'$ is a commitment to a valid signature under $pk'$ for a message committed to by $c_1'$.

\item Verifier computes $C_{I_x+1}$, $C_{p_1}=C_{T_z-T_y}$, and $C_{p_2}=C_{T_y-T_x}$.

\item Prover proves in zero knowledge that $I_x+1=I_z$, $p_1>0$, and $p_2>0$.
\end{enumerate}

\noindent \textbf{Complexity} \\
The overhead of our scheme does not increase as the size of the log grows, meaning we achieve $O(1)$ proof construction and verification time in the size of the log. This optimal asymptotic overhead is enabled by the strict ordering of log entries by timestamp enforced by monitors.  Imposing such a locally checkable condition on the structure of the log removes the need for any computation that would involve more than one SCT and two log entries in either the construction of a proof or its verification. Our construction also enjoys reasonable overhead in practice, as discussed in section \ref{PerfEval}.

\begin{figure}
\centering
\begin{tikzpicture}
\node (sighash) {$C_{\textit{Sign}_{k_H}(H(x))}$};
\node (hash) [below=of sighash] {$C_{H(x)}$} ;
\node (Thash) [right=of hash]{$C_{T_x+H(x)}$};
\node [style=rectangle, draw=black, very thick] (T) [below=of Thash] {$C_{T_x}$};
\node (sigT) [above =of Thash] {$C_{\textit{Sign}_{k_T}(T_x+H(x))}$};

\node (Ihash) [left=of hash]{$C_{I_x+H(x)}$};
\node [style=rectangle, draw=black, very thick] (I) [below=of Ihash] {$C_{I_x}$};
\node (sigI) [above =of Ihash] {$C_{\textit{Sign}_{k_I}(I_x+H(x))}$};

\draw[->] (hash.east) to node[fill=white] {\tiny sum} (Thash.west);
\draw[->] (hash.north) to node[fill=white] {\tiny verify} (sighash.south);
\draw[->] (Thash.north) to node[fill=white] {\tiny verify} (sigT.south);
\draw[->] (T.north) to node[fill=white] {\tiny sum} (Thash.south);
\draw[->] (sigT.south) to node[fill=white] {\tiny verify} (Thash.north);

\draw[->] (hash.west) to node[fill=white] {\tiny sum} (Ihash.east);
\draw[->] (sighash.south) to node[fill=white] {\tiny verify} (hash.north);
\draw[->] (sigI.south) to node[fill=white] {\tiny verify} (Ihash.north);
\draw[->] (I.north) to node[fill=white] {\tiny sum} (Ihash.south);
\draw[->] (Ihash.north) to node[fill=white] {\tiny verify} (sigI.south);

\end{tikzpicture}

\caption{The process of getting verified commitments to the Index and Timestamp of a log entry. SCTs only require one signature because there is no need to ensure a connection with an Index for SCTs (which have no Index).}
\label{picture}
\end{figure}

\subsection{Security Analysis}

We now prove the security of the construction in the previous section. Appendix \ref{ZKApp} gives a formal definition of a CT log with the modifications we require for our scheme to work, a security definition that captures our goals, and a full proof of security. Informally, we require the following properties of a Zero-Knowledge Exclusion Proof:
\begin{itemize}
\item \textbf{Completeness}: Any prover who really does have an SCT that has been excluded from a log will be able to convince a verifier of this fact.

\item \textbf{Soundness}: No fraudulent prover can convince a verifier that a log has excluded an SCT. We define this in terms of a security game $\textit{ProofExcl}$ between a verifier $V$ and an adversary $A$ who wishes to run protocol $\Pi$ to prove that honest log $L$ has excluded an entry. We require that no adversary $A$ can win this game (and therefore convince $V$ to accept a fraudulent proof) with more than negligible probability. 

\item \textbf{Zero-Knowledge}: A verifier learns nothing from the proof except that a log entry has been excluded.  We achieve this by exhibiting a simulator for the verifier's view of the protocol. 
\end{itemize}

The proof of security is given in Appendix~\ref{ZKApp}.
Here we sketch the main idea.
The proofs for completeness and zero knowledge follow from the properties of the underlying components of the protocol. Proving soundess requires more work. The idea is that the adversary is forced by the security of the signatures to use only index and timestamp values that appear in the log. We need to show that the log cannot contain signed sums $s_1 = I_x+H(x)$ and $s_2 = I_z+H(z)$, and signed hashes $H(x')$ and $H(z')$ such that $s_1 - H(x')$ is one less than $s_2 - H(z')$.  Otherwise, the adversary would obtain a fake pair of index numbers that are one apart and that can therefore fool the verifier (a similar statement should also hold for timestamps). Finding such entries would, however, break the $\textit{poly}(\lambda)$-near collision resistance of the hash function, as shown in Appendix~\ref{ZKApp}.

\subsection{Alternative Construction}

The zero-knowledge proof in the previous section can be modified so that it only relies collision resistant hash functions, not $\delta$-near collision resistance.  However, this proof is slightly less efficient. 

The modified proof $\Pi'$ is similar to $\Pi$, except that instead of having signatures on $I_x+H'(x)$ and $T_x+H'(x)$, the log publishes signatures on $I_x \| H'(x)$ and $T_x \| H'(x)$, where $\|$ denotes concatenation. Before adding $T_x$ or $I_x$ to $H'(x)$, the verifier multiplies each by a commitment to $2^n$, effectively left-shifting $T_x$ and $I_x$ so the sum with $H'(x)$ can be interpreted as a concatenation of the two values. This simplifies the soundness proof so that the value of $d$ will always be 0, and any adversary $A'$ winning the modified security game could be used by another adversary $B'$ to find a collision for $H'$. The proof that $\Pi'$ is a zero-knowledge proof is otherwise very similar to that of $\Pi$. Although the soundness of $\Pi'$ relies on a weaker assumption than $\Pi$, it requires the prover to send an additional commitment to the verifier and for the prover and verifier to compute two interactive multiplications on commitments \cite{CM01}, rendering $\Pi'$ less efficient than $\Pi$. 

\subsection{Actionable Proof of Exclusion}
\label{sec:actionable}

Although we presented a solution with best-possible privacy, in some contexts it is desirable to reveal some more information to help with deployment of the system. In particular, we are interested in the case where the verifier needs some recourse other than completely distrusting a log after learning that the log has cheated. In our current system, a verifier has no power, for example, to issue a whitelist or blacklist of trusted/untrusted certificates for a particular log because it learns nothing about excluded SCTs other than that they exist.

We can do much better if we are willing to compromise on the degree of privacy offered to provers.  Since it is likely that any SCT excluded from the log but publicly accessible on the internet will be caught by an auditor (e.g. Google) without need for a proof, we must mainly concern ourselves with SCTs for certificates that are not publicly accessible.  We want a scheme where the verifier does not learn the certificate corresponding to a bad SCT but enables browsers to reject that certificate if they encounter it.  In other words, we wish to enable browsers to recognize when an SCT they encounter has been excluded from a log without revealing the contents of the SCT to a verifier.  

We achieve this by relaxing our privacy goals. Whereas our original proof was of a statement of the form ``There exists a properly signed SCT $y$ whose timestamp falls between adjacent log entries $x$ and $z$,'' we now wish to prove that ``There exists a properly signed SCT $y$ whose timestamp falls between adjacent log entries $x$ and $z$, and $c = H(y)$ is the hash of $y$,''
where $c$ is supplied by the prover. By constructing a proof that intentionally reveals $H(y)$, we make it possible for an auditor who encounters an SCT $y'$ to check if $H(y')=H(y)$. At the same time, the high entropy of $H(y)$ ensures that $y$ remains hidden. This weakened proof is of course only useful if the SCT $y$ in question is not in the set of publicly available SCTs or else the verifier could test $H(y)$ against hashes of each available SCT and discover $y$ by exhaustive search. Fortunately, the scenario where a non-public SCT is omitted from the log is exactly the scenario we aim to address.

We need only to make a slight modification to our zero knowledge protocol in order to achieve this weakened notion of privacy. Suppose we choose to instantiate our protocol with Pedersen commitments \cite{Ped92}, which provide all the properties we require in a commitment scheme. Pedersen commitments are of the form $C_m=g^mh^r$ for public $g, h$ and secret randomness $r$. We can set $r=0$ in our commitment to $H(y)$ and maintain the binding property of the commitment due to the high entropy of $H(y)$, as mentioned above (and have the verifier compute the commitment $g^{H(y)}$ upon receiving $H(y)$). The rest of the proof proceeds as before. Now the verifier can distribute $H(y)$, and any browser can check if an SCT it has acquired matches the one used to construct the proof.

\subsection{Practical Considerations} \label{ZKpractical}

The presented solution can be implemented on top of the existing Certificate Transparency specification, RFC6962, with some adjustments.\\

\noindent \textbf{Key distribution:} The existing mechanism for distributing the log's key for verifying SCTs signed using $k_S$ can be used for distributing the additional keys for verifying signatures produced using $k_H$, $k_T$, and $k_I$.\\

\noindent \textbf{Obtaining entries around a timestamp:} The current CT RFC only specifies an API for getting the index of an entry in the log given its hash, requiring the client to know the timestamp and the certificate for that entry. A new API endpoint is needed to provide a number of log entries around a timestamp specified by the client:\texttt{/ct/v1/get-entries-around-timestamp} with the following inputs:
\begin{itemize}[nosep]
  \item {\em timestamp:} The desired timestamp, in decimal.
  \item {\em count:} The desired number of entries.
\end{itemize}

\medskip\noindent
and the following output for each entry:
\begin{itemize}[nosep]
  \item {\em index:} Index of the entry.
  \item {\em sct:} JSON structure containing the SCT, as described in Section 4.1 of RFC6962.
  \item {\em signatures:} JSON structure containing $\textit{Sign}_{k_H}(H(e))$, $\textit{Sign}_{k_I}(i+H(e))$ and $\textit{Sign}_{k_T}(t+H(e))$ ($H(e), i+H(e)$ and $t+H(e)$ can be calculated by the client).
\end{itemize}
This additional API endpoint may be independently useful for monitors searching for fraudulent certificates suspected to be issued around the time of some particular event, e.g. a security breach. \\

\noindent \textbf{Adding signatures to SCTs:} Signed Certificate Timestamps have an extensions field which can be used for embedding the signatures $\sigma_{T_y+H(y)},$ $\sigma_{H(y)}$. As the signature in the Signed Certificate Timestamp covers the extensions field, these signatures should be produced before the signature over the entire Signed Certificate Timestamp, $\textit{Sign}_{k_S}(s)$.\\

\noindent \textbf{Uniqueness of timestamps:} Timestamps are required to be unique under this construction. Distributed implementation of a log could issue two SCTs with exactly the same timestamp for two different entries submitted at the same time. To avoid restricting the implementation, each front-end of the log will be assigned a unique ID, which will be concatenated to the timestamp to ensure uniqueness. In practice, the unique ID should be added as an extension to the SCT to avoid changing the timestamp format currently specified in the CT RFC.\\

\noindent \textbf{Denial of Service:} The need for the verifier to check several signatures and zero knowledge proofs introduces the potential for a denial of service attack launched by sending many fake proofs that will fail to verify but will occupy verifier time, thereby preventing proper functioning of this kind of proof of exclusion. Although concern regarding denial of service potential is legitimate, it is not as bad as it may first seem. The verifier can abort verification of a bad proof early in less than 1/7 of the verification time for a correct proof when either a signature verification fails or the submitted commitments do not satisfy the relations needed for the proof to go through.\\

\noindent \textbf{Uncooperative logs:} Logs have an incentive to not respond to queries by auditors who are building a proof of exclusion. Suppose a log does not respond to timestamp-based queries.  An auditor can instead use index-based queries (also used by monitors) and perform a binary search to find the entry with the relevant timestamp needed for constructing the proof. 

A malicious log could go further.  Suppose it  places a ``dummy'' entry on each side of the missing entry and refuses to respond to requests for those dummy entries. This prevents the auditor from constructing the proof.  However, this tactic will be unsuccessful because monitors will discover this misbehavior when they attempt retrieve all log entries as part of their normal periodic sweep.  For this we need that a request to a CT log from a monitor is indistinguishable from a request from an auditor. This is needed to defeat uncooperative logs, and is also needed for correct operation of log monitoring. 

\subsection{Performance Evaluation} \label{PerfEval}
\begin{figure*}
\begin{center}
    \small
    \begin{tabular}{ l l l l }
    \textbf{Component} & \textbf{Proof Size (bytes)} & \textbf{Prover Time (ms)} & \textbf{Verifier Time (ms)} \\ \hline
    \textbf{Signature Verification (7x)} & 316888 & 4986.8 & 2274.0 \\
    \textbf{Rest of Proof} & 16328 & 30.6 & 36.2 \\
    \textbf{Total} & 333216 & 5017.4 & 2310.2 \\
    \end{tabular}
    \caption{Running times and proof size for our zero knowledge proof of exclusion using the signatures of \cite{CL02}. 
}
\label{ZKP_perf}
\end{center}
\end{figure*}

\begin{figure}
	\centering
	    \small
    \begin{tabular}{ l l l }
    \textbf{Signature} & \textbf{CL02\cite{CL02}} & \textbf{BBS\cite{BBS04}} \\ \hline
    \textbf{Log Entry Growth (bytes)} & 1791 & 480 \\
    \textbf{SCT Growth (bytes)} & 597 & 160 \\ 
    \end{tabular}
    \caption{Expected growth in log entry and SCT size due to inclusion of signatures required for our zero knowledge proof of exclusion protocol when using RSA-based CL02 signatures or pairing-based BBS signatures. Log entries require including 3 signatures, and SCTs require 1.}
\label{ZKP_bloat}
\end{figure}
We built a prototype of our protocol in parts using ZKPDL \cite{ZKPDL} for the purpose of estimating the performance of such a scheme and ran it on a consumer laptop equipped with an Intel Core i5-2540M Sandy Bridge Processor at 2.60GHz. Our implementation consisted of the entire protocol except the verification of the signatures, for which a ZKPDL implementation already exists for the RSA-based CL02 signature scheme of Camenisch and Lysyanskaya \cite{CL02}. Using 2048-bit RSA keys, we measured the proof size, prover computation time, and verifier computation time for our implementation as well as the signature verification (which we multiply by 7 to account for the 7 verifications needed in the proof). The numbers shown in figure \ref{ZKP_perf} are averages over 20 executions. Although the protocol may seem costly, we point out that it is meant to be executed infrequently to catch cheating logs and is not expected to be part of regular, day-to-day web browsing activities. As such, the cost of under 10 seconds and 350\;KB on the network seems reasonable. It is important to note that the 333\;KB proof is not something that will be included in SCTs or logs but data that is generated when the proof is needed. The overhead for instantiating our scheme with CL02 and BBS signatures is shown in figure \ref{ZKP_bloat} in terms of growth of SCTs and log entries. These numbers, especially those for BBS signatures, are very reasonable when it is considered that the SCTs will typically be delivered along with certificate chains that are already sometimes several kilobytes large.

\begin{figure}[h]
\centering
\begin{framed}
\begin{tikzpicture}[
block/.style={align=left, minimum width=1.6cm, minimum height=.5em},
space/.style={align=left, minimum width=0cm}]

\node (D) [block] {\small \underline{Domain Owner $D$}};
\node (D_1) [below=1.5em of D, block]{};
\node (D_2) [below=1.5emof D_1, block]{};
\node (D_3) [below=1.5emof D_2, block]{}; 
\node (D_4) [below=1.5emof D_3, block]{};

\node (CA_1) [right=of D_1, block]{};
\node (CA) [block, above=1.5em of CA_1]{\small \underline{CA}};
\node (CA_2) [right=of D_2, block]{};
\node (CA_3) [right=of D_3, block]{};
\node (CA_4) [right=of D_4, block]{};

\node (Log_1) [right=of CA_1, block]{};
\node (Log) [block, above=1.5em of Log_1]{\small \underline{Log Operator}};
\node (Log_2) [right=of CA_2, block]{};
\node (Log_3) [right=of CA_3, block]{};
\node (Log_4) [right=of CA_4, block]{};

\draw[->] (D_1.center) to node[above=.05em]{\small Request Certificate} (CA_1.west);
\draw[->] (CA_2.east) to node[above=.05em]{\small Precertificate} (Log_2.center);
\draw[->] (Log_3.center) to node[above=.05em]{\small SCT} (CA_3.east);
\draw[->] (CA_4.west) to node[above=.05em]{\small Certificate, SCT} (D_4.center);

\end{tikzpicture}
\end{framed}
\begin{framed}
\begin{tikzpicture}[
block/.style={align=left, minimum width=2cm, minimum height=.5em},
space/.style={align=left, minimum width=.1cm}]

\node (D_1) [block]{};
\node (D) [block, above=1em of D_1] {\small \underline{Domain Owner $D$}};
\node (D_2) [below=2emof D_1, block]{};
\node (space_0) [right=of D, space]{};
\node (space_1) [right=of D_1, space]{};
\node (space_2) [right=of D_2, space]{};

\node (V_1) [right=of space_1, block]{};
\node (V) [block, above=1emof V_1]{\small \underline{Site Visitor}};
\node (V_2) [right=of space_2, block]{};
\draw[->] (V_1.west) to node[above=.05em]{\small Request Site} (D_1.east);
\draw[->] (D_2.east) to node[above=.05em]{\small Certificate, SCT} (V_2.west);

\end{tikzpicture}
\end{framed}
\caption{A high-level view of interactions between parties involved generating and retrieving certificates in CT. We will modify the messages sent in each message to hide private subdomains.}
\label{privCT}
\end{figure}

\section{Private Subdomains in CT}
\label{sec:privateCT}

We next turn to the second privacy difficulty discussed in
the introduction: extending CT to handle private subdomains, such as
the internal subdomains of an enterprise network.  
Recall that CT requires every SCT and log 
entry to reveal the domain name for which the
associated certificate is issued. 
This prevents companies who have private internal subdomains 
from using a public CA,
an undesirable consequence of CT. 

A great deal of discussion has taken place around
this issue in the CT forum~\cite{Redaction, redaction-disc}.  The main proposed solutions 
involve redaction of names via wildcards or inclusion of a
domain name hash instead of a cleartext name.  Wildcards in log entries pose
a security risk and can only be used to redact the leftmost label of a domain name. Hashed domain names, 
on the other hand, are vulnerable to a dictionary attack. Since domain names tend to have very little 
entropy, a dictionary attack is a significant threat to a solution that relies on hashing.
 For these reasons, the current draft RFC for CT version~2.0
has no solution for this problem~\cite{RFC-bis}.   

In this section, we will consider the various requirements of a solution to this problem and establish a clear threat model before providing a solution which protects against all the threats we enumerate.

\subsection{Threat Model}
Figure \ref{privCT} depicts interactions between all parties involved in the issuance and use of a certificate in CT. The domain owner requests a certificate from a CA who sends a precertificate to a log, which in turn sends the CA an SCT that is passed on, along with a certificate, to the domain owner. Later, a site visitor or auditor visits the site for which the certificate was issued and receives the certificate and the SCT. We will modify the information sent in each message in order to hide private subdomains. 

We state the security requirements from the standpoint of four parties involved in this scheme:
\begin{itemize}
\item[--]\textbf{Outsider:} An outsider viewing the log without having visited a private subdomain cannot learn anything about the private subdomain except its suffix.  For example, the SCT for {\tt private.company.com} should reveal nothing beyond the fact that it belongs to a subdomain of {\tt company.com}. 
 
\item[--]\textbf{Domain Owner:} The domain owner must not be able to frame an honest log by claiming that it has an SCT for a certificate that does not appear on the log. 
 
\item[--]\textbf{CA:} The CA must not be able to issue a new, valid certificate using an SCT for another domain that already appears in a log. 

\item[--]\textbf{Log:} Logs have the same requirements as in CT without private domains. Auditors and monitors should be able to verify that SCTs they encounter appear in logs, and monitors should be able to examine the log and find any fraudulent certificates therein. 
\end{itemize}

\subsection{Private Subdomains}\label{sec:privdom}
These requirements can be met with a scheme that uses only cryptographic commitments that guarantee binding and hiding properties. In this scheme, the name of a domain owner and a commitment to a subdomain appear in CT logs instead of the subdomain itself. The decommitment to the subdomain acts as a proof that the site gives to visitors, allowing them to verify that the SCT belongs to their domain and that it appears in a log. 
 
 The construction is as follows (summarized in figure \ref{privSub}): The domain owner $D$ generates a commitment $C_d$ to subdomain $d$ (e.g. for the domain name {\tt secret.example.com}, we have $D=${\tt example.com} and $d=${\tt secret}) with decommitment randomness $r$.  The domain owner sends $(d, D, C_d, r)$ to the CA.   The CA checks that $C_d$ is computed correctly and passes $(D, C_d)$ to the log in the signed precertificate. The log incorporates $(D, C_d)$ into the log entry and SCT and sends the SCT back to the CA who passes it on to the domain owner as usual. $d$ and $r$ are embedded into the final certificate issued by the CA.  Now any visitor to the site (or any auditor/monitor) is given the certificate and the SCT.  The visitor verifies that the commitment $C_d$ in the SCT is in fact a commitment to $d$ with decommitment randomness $r$.  Monitors who audit the logs can check that the correct number of {\tt example.com} certificates are present on the logs and that no spurious certificates have been issued.  \\

\begin{figure*}
\centering
\begin{framed}
\begin{tikzpicture}[
block/.style={align=left, minimum width=4cm, minimum height=3em},
tallblock/.style={align=left, minimum width=4cm, minimum height=7em},
textblock/.style={align=left, minimum width=4cm, minimum height=1em},
space/.style={align=left, minimum width=0cm}]

\node (D) [block] {\underline{Domain Owner $D$}};
\node (D_text1) [below=0em of D, block]{$C =$ commit(``{\tt secret}'', $r$)};
\node (D_1) [below=1.5em of D_text1, block]{};
\node (D_2) [below=1.5emof D_1, block]{};
\node (D_3) [below=1.5emof D_2, block]{}; 
\node (D_4) [below=1.5emof D_3, tallblock]{};

\node (CA_1) [right=of D_1, block]{};
\node (CA_text1) [right=of D_text1, block]{};
\node (CA) [block, above=0em of CA_text1]{\underline{CA}};
\node (CA_2) [right=of D_2, block]{};
\node (CA_3) [right=of D_3, block]{};
\node (CA_4) [right=of D_4, block]{};

\node (Log_1) [right=of CA_1, block]{};
\node (Log_text1) [right=of CA_text1, block]{};
\node (Log) [block, above=0em of Log_text1]{\underline{Log Operator}};
\node (Log_2) [right=of CA_2, block]{};
\node (Log_3) [right=of CA_3, block]{};
\node (Log_4) [right=of CA_4, block]{log($C$, {\tt example.com})};

\draw[->] (D_1.center) to node[above=.05em, align=center]{Request Certificate\\``{\tt secret}'', ``{\tt example.com}'', $C$, $r$} (CA_1.center);
\draw[->] (CA_2.center) to node[above=.05em, align=center]{Precertificate\\$C$, ``{\tt example.com}''} (Log_2.center);
\draw[->] (Log_3.center) to node[above=.05em, align=center]{SCT \\$C$, ``{\tt example.com}''} (CA_3.center);
\draw[->] (CA_4.center) to node[above=.05em, align=center]{Certificate:\\ ``{\tt secret.example.com}'', $r$ \\[.3\baselineskip]SCT: \\$C$, ``{\tt example.com}''} (D_4.center);

\end{tikzpicture}
\end{framed}
\begin{framed}
\begin{tikzpicture}[
block/.style={align=left, minimum width=4cm, minimum height=5em},
textblock/.style={align=left, minimum width=4cm, minimum height=1em},
space/.style={align=left, minimum width=.1cm}]

\node (D_1) [block]{};
\node (D) [block, above=1em of D_1] {\underline{Domain Owner $D$}};
\node (D_2) [below=2emof D_1, block]{};
\node (space_0) [right=of D, space]{};
\node (space_1) [right=of D_1, space]{};
\node (space_2) [right=of D_2, space]{};

\node (V_1) [right=of space_1, block]{};
\node (V) [block, above=1emof V_1]{\underline{Site Visitor}};
\node (V_2) [right=of space_2, block]{};
\node (V_3) [below=1em of V_2, textblock]{verify($C$, {\tt secret}, $r$)};
\draw[->] (V_1.center) to node[above=.05em, align=center]{Page Request:\\ ``{\tt secret.example.com}''} (D_1.center);
\draw[->] (D_2.center) to node[above=.05em, align=center]{Certificate:\\ ``{\tt secret.example.com}'', $r$ \\[.3\baselineskip]SCT: \\$C$, ``{\tt example.com}''} (V_2.center);

\end{tikzpicture}
\end{framed}
\caption{CT augmented with support for private subdomains. $C$ represents a commitment to the private subdomain ``{\tt secret}'' with decommitment randomness $r$. The SCT and log entry for ``{\tt secret.example.com}'' replace the private subdomain with $C$, and visitors to the site verify that the commitment corresponds to the appropriate subdomain.}
\label{privSub}
\end{figure*}

\subsection{Security}
We now argue that this scheme achieves security in the threat model discussed above. Below we outline how each of the security requirements imposed by private subdomains are met.\\

\noindent \textbf{Outsider:} Due to the hiding property of the commitment, an outsider observing log entries does not learn the subdomain corresponding to $C_d$ from the log.\\

\noindent \textbf{Domain Owner:} Domain owners cannot frame an honest log because every SCT from that log is signed by the log. \\

\noindent \textbf{CA:} The binding property of the commitment scheme prevents a CA from issuing fraudulent certificates that reuse existing SCTs because doing so would require finding, for a commitment $C_d$ to a subdomain $d$, an alternative decommitment to $d'$ for which the fraudulent certificate would be issued. \\

\noindent \textbf{Log:} The means CT provides for auditing and exposing misbehaving logs are not affected by private subdomains. We point out that, as mentioned above, any auditor or monitor given access to $d$ and $r$ can verify an SCT just as well as it could an SCT for a non-private domain. Moreover, a monitor working on behalf of a domain owner will still be able to detect fraudulent certificates issued for the domain owner's private subdomains if it is informed of the number of private subdomains the domain owner has registered. If the number of such subdomains appearing in logs differs from the expected number, then it is clear that either there are extra, fraudulent certificates or that the log has withheld certificates for which it issued SCTs. 

\subsection{Practical Considerations}
The SCT from the log, along with $d$ and $r$, can be embedded in the final certificate issued by the CA.  A browser visiting the site would first validate the certificate and the signature on the SCT.  It would then extract $r$ and $d$ from the certificate and use them to verify the commitment $C_d$ in the SCT.

\section{Short-Lived Certificates in CT}
\label{sec:shortLived}

We also consider the interaction of Certificate Transparency with another proposed improvement to the web's public-key infrastructure: short-lived certificates \cite{Riv98}. The fundamental idea behind short-lived certificates is to issue certificates with relatively short validity periods (i.e. one or a few days) in order to remove the need on the part of a client to check whether or not a certificate received remains valid. Instead of revoking a certificate, the domain owner in the short-lived certificate setting simply refrains from issuing a new short-lived certificate in the event of key compromise. Topalovic et al. \cite{TSH+12} explore the details of this idea and compare it favorably to other proposed and implemented solutions for certificate revocation.  

Considering the possibility of applying short-lived certificates in conjunction with certificate transparency presents the new difficulty of avoiding bloated log sizes. The frequency with which new certificates would need to be issued under the short-lived certificate regime would mean that where one log entry would suffice for a regular certificate, nearly 100 short-lived certificates would be needed. A naive integration of the two solutions would lead to log sizes far larger than is truly necessary, but a simple solution can remedy this issue and allow the two solutions to work together quite well. 

Instead of creating one log entry per certificate for short-lived certificates, a large number of potential short-lived certificates will be allotted one log entry. This log entry will have a special flag set to indicate that it corresponds to a family of short-lived certificates, and the validity period for the log entry will be comparable to that of a regular, long-lived certificate. The special log entry will also include the root of a Merkle tree of all the short-lived certificates affiliated with the entry. When visiting a site that uses short-lived certificates, auditors will receive a proof that the SCT for that site's certificate is in the Merkle tree whose root appears in the corresponding log entry. This is the best-of-both-worlds: the growth rate of the CT log is no different from that due to a regular certificate, but the short-lived nature of the individual certificates also resolves revocation issues.

\section{Related Work}

Well-documented concerns about the state of the web's public-key infrastructure (PKI) \cite{CO13} have led to a number of proposed solutions for improving the capacity to catch problematic CAs or to dispense with CAs altogether. Pinning solutions such as Tack \cite{TACK} improve security by adding a ``pinned'' signing key that is associated with a server and without which the server will not be considered authentic. In addition to CT, a handful of other proposed PKI improvements rely on the concept of transparent logs. The EFF's Sovereign Keys proposal uses a log to list the Sovereign Keys associated with domains and allows for automatically routing around certain certficate-related attacks where users tend to click through browser warnings \cite{SovKeys}. AKI and ARPKI implement a set of checks and balances to prevent problems rooted in the misbehavior of one or a few CAs, and ARPKI is co-designed with a formal model to prove its security properties \cite{KHP+13, BCK+14}. Namecoin uses a blockchain to achieve goals similar to those of other improved PKI schemes \cite{Namecoin}. Taking a somewhat different approach, Etemad and Kupcu \cite{EK15} propose a scheme where non-colluding servers gossip to catch any cheating parties.

Certificate Transparency has received heightened academic attention as it begins to be widely deployed in Chrome and Chromium browsers. Multiple works have analyzed the security of Certificate Transparency and its generalizations or designed efficient protocols for parts of the scheme that are not specified in the current RFC. Chuat et al. \cite{CSP+15} provide efficient gossip protocols to ensure that different users of CT do in fact have the same view of the logs with which they interact, and Dowling et al. \cite{DGHS16} formally prove the security of many aspects of Certificate transparency. In the space of extensions and modifications to CT, Singh et al \cite{SSR17} propose a variant of CT with extended functionality and shorter log proofs and Syta et al \cite{STV+16} propose CoSi, a ``witness cosigning'' protocol that removes the need for gossip in CT by ensuring that every certificate is validated by many witnesses, offering protection even against sustained man in the middle attacks. Chase and Meiklejohn \cite{CM16} prove the security of a generalization of Certificate Transparency they call ``Transparency Overlays,'' which can be added on top of other protocols such as Bitcoin to add a layer of transparency. They leave the exploration of the interaction between transparency and privacy as an interesting open problem, one which our work begins to study.

The notion of transparency itself has been studied in various settings not necessarily related to the web's PKI and appears to be a fairly general notion. Revocation Transparency provides a transparency-based solution for certificate revocation \cite{RevTrans}. DECIM \cite{YRK15} uses a transparency log to keep track of uses of a public key in a messaging context, thereby enabling its owner to detect key misuse. Fahl et al. \cite{FDP+14} improve the security of mobile app distribution with ``Application Transparency.''  Extended Certificate Transparency \cite{Rya14} adds a second log to efficiently handle certificate revocation and presents an application to secure email. Finally, a number of works have focused on building general transparent data structures and enabling computation over them \cite{Insynd, Balloon, miller2014authenticated, Versum}. 

At least two systems have combined transparency with privacy notions. CONIKS~\cite{MBB+15} provides a transparent identity/value map (with key management for secure communication as a motivating application) while hiding identifiers and values. NSEC5~\cite{GNPR16} is an extension to DNSSEC providing stronger privacy by hiding the set of valid subdomains for any given domain. Both use VRFs as a key building block for preserving privacy.

\section{Availability}

The simulation code for the second part of our zero knowledge proof of exclusion can be found on GitHub at \url{https://github.com/SabaEskandarian/CTZKPExcl}. The code used to interpret and run the simulation, along with the implementation of the signature scheme is part of the ZKPDL project \cite{ZKPDL} at \url{https://github.com/brownie/cashlib}.

\section{Conclusion and Open Questions}
We have examined various elements of Certificate Transparency with the goal of applying cryptographic solutions to remedy the compromise of private information without significantly disrupting the operation of CT as it is currently designed. Our primary goals were to develop a privacy-preserving means of exposing logs that have excluded certificates after issuing SCTs for them and to securely redact private subdomains in CT logs. We accomplished the former goal by developing a zero knowledge proof that achieves the best possible security for privately implicating a cheating log. The latter problem was solved by the realization that commitment schemes appropriately applied exactly match the security properties required for private subdomains. We also showed how CT and short-lived certificates can be used in concert. It is our hope that these practical solutions based on widely-used cryptographic primitives can be applied to promote a more transparent web without any compromise in user privacy. 

We leave the following open questions for future work:
\begin{itemize}
\item Are there improved, practical schemes for browsers to privately query CT logs? We considered some partial solutions in section \ref{ZKpractical}, but a complete solution to this problem would be very useful. One possibility is to use a mechanism similar to OCSP stapling, where the proof of SCT inclusion in the log is sent with a site's certificate, but this would require more bandwidth and frequent updates to the proof to use an up-to-date STH. 
\item Can the zero knowledge proof of exclusion of log entries we present be modified to require fewer signatures or even more lightweight primitives, thereby reducing overhead in SCTs and log entries?
\item We have shown how to achieve private subdomains in CT. Is it also possible to efficiently make entire domain names private? Such a scheme would be useful in practice for enterprises who wish to register domains for new projects before announcing them publicly. 

\end{itemize}

\section*{Acknowledgements}

This work is supported by grants from NSF, DARPA, ONR, 
and the Simons Foundation. 
Opinions, findings and conclusions or recommendations expressed in this
material are those of the authors and do not necessarily reflect the views 
of DARPA.

\bibliographystyle{plain}
\bibliography{paper}

\appendix

\section{Security Model and Proof Details}\label{ZKApp}
Below we give a formalization of the definition of a CT log that has been modified to include our additional requirements. 
\begin{definition}[CT Log]
A CT log is an entity that maintains append-only lists Log, hashLog, ILog, and TLog as well as 4 signing keys $k_H, k_T, k_I, k_S$ and a counter $i$. A CT log's lists are initially empty, and the counter is initially set to 0. It also has access to $\textit{poly}(\lambda)$-near collision resistant hash function $H$, a signature scheme $\sigma=(\textit{KeyGen}', \textit{Sign}', \textit{Verify}')$ , and another signature scheme $\sigma_k = (\textit{KeyGen}, Sign, \textit{Verify})$ with efficient proof of knowledge properties (as in Definition~\ref{def:sigcom}). Upon receiving a message $m, t$ from party $P$, a log forms the SCT $s=(m,t)$ and takes the following actions\footnote{Although in reality it is the log who assigns $t$ and not $P$, defining the security model such that $P$ selects $t$ makes the adversary strictly more powerful in our setting.}:
\begin{itemize}
\item send $P$ $(H(s), \textit{Sign}_{k_H}(H(s)))$
\item send $P$ $(t+H(s), \textit{Sign}_{k_T}(t+H(s)))$
\item send $P$ the SCT $s$ and $\textit{Sign}'_{k_S}(s)$
\item append $e = (m, i, t)$ to Log
\item append $(H(e), \textit{Sign}_{k_H}(H(e)))$ to hashLog
\item append $(i+H(e), \textit{Sign}_{k_I}(i+H(e)))$ to ILog
\item append $(t+H(e), \textit{Sign}_{k_T}(t+H(e)))$ to TLog
\item increment $i$ by one
\end{itemize}

A well-formed list Log consists of $\textit{poly}(\lambda)$ entries ordered by increasing order of $I$. Values of $I$ must begin at zero and be incremented by one for each subsequent entry. Values of $T$ must fall within a range of $\textit{poly}(\lambda)$ where $\lambda$ is a security parameter. Furthermore, the order of entries of $I$ and $T$ must correlate, that is, if $I_x>I_y$, then it must hold that $T_x>T_y$. Lists hashLog, ILog, and TLog have one entry per entry in Log and are sorted in the same way.

A CT log also responds to requests for entries in its lists with the requested information.
\end{definition}

We make use of the following proof of exclusion soundness game and security definition in proving the security of our proposed zero knowledge exclusion proof. 

\begin{definition}[$\textit{ProofExcl}_{\Pi,A,V,L}(\lambda)$]
The proof of exclusion game is played by an adversary $A$ with a log $L$ and Verifier $V$ using protocol $\Pi$ and security parameter $\lambda$. $L$ maintains append-only lists \emph{Log} (a well-formed CT log), \emph{hashLog}, \emph{ILog}, and \emph{TLog}, which are initially empty. $L$ also has the appropriate signing keys and a counter $i$, initialized to zero. The game consists of two phases:
\begin{enumerate}
\item $A$ interacts with $L$ in $\textit{poly}(\lambda)$ rounds where in each round $A$ sends $L$ a message $(m, t)$ with the range of all $t$'s sent being at most $\textit{poly}(\lambda)$ and each $t$ strictly greater than the previous one. After each round, $A$ retrieves from $L$ the new entry added to each list $L$ holds.
\item $A$ interacts with $V$ according to $\Pi$, with $A$ playing the role of the prover and $V$ the role of the verifier. At the end of the protocol, $V$ outputs a bit $b=1$ if it accepts the proof from $A$ as valid, and outputs $b=0$ otherwise.
\end{enumerate}
$A$ wins the game when $V$ outputs $b=1$.
\end{definition}
\begin{definition}[Zero-Knowledge Exclusion Proof]
A zero-knowledge exclusion proof is an interactive protocol between a Prover $P$ with access to log entries $x,z$, SCT $y$, signatures $\sigma_{I_x+H(x)},$ $\sigma_{T_x+H(x)},$ $\sigma_{H(x)},$ $\sigma_{T_y+H(y)},$ $\sigma_{H(y)},$ $\sigma_{I_z+H(z)},$ $\sigma_{T_z+H(z)},$ and $\sigma_{H(z)}$ and a Verifier $V$ with access to the public keys corresponding to the signatures who outputs $1$ or $0$ at the end of their interaction. Both parties are given access to a log $L$ and a security parameter $\lambda$. We require the following properties from a secure exclusion proof:
\begin{enumerate}
\item \textbf{Completeness}: $\forall x,y,z$ $s.t.$ $I_x+1=I_z, T_x<T_y<T_z$, and $\sigma_m$ is a signature on $m$ for all signatures above, $V$ outputs 1.

\item \textbf{Soundness}: $\forall$ PPT Adversaries $A$, $Pr[\textit{ProofExcl}_{\Pi,A,V,L}(\lambda)=1]\leq \textit{negl}(\lambda)$.

\item \textbf{Zero-Knowledge}: There exists an efficient simulator $S$ that, given only $\lambda$, can generate a transcript indistinguishable from that of the interaction between $P$ and $V$.
\end{enumerate}
\end{definition}

For convenience of notation, the formal proof refers to the lists of log entries, hashes of log entries, hashes plus indexes, and hashes plus timestamps as Log, hashLog, ILog, and TLog, respectively. 

\begin{theorem}
If $H$ is a $\textit{poly}(\lambda)$-near collision resistant hash function, the signature schemes $\sigma, \sigma_k$ used are existentially unforgeable, and $\sigma_k$ allows for efficient proofs of knowledge of a signature, then $\Pi$ is a secure Zero-Knowledge Exclusion Proof. 
\end{theorem}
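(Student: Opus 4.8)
The plan is to establish the three required properties—completeness, zero-knowledge, and soundness—separately, with essentially all of the effort going into soundness.

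For completeness I would simply trace an honest execution of $\Pi$. Given $x,y,z$ with $I_x+1=I_z$, $T_x<T_y<T_z$, and valid signatures on all the signed messages, the homomorphic sums $C_{I_x+H(x)}$, $C_{T_x+H(x)}$, $C_{T_y+H(y)}$, $C_{I_z+H(z)}$, $C_{T_z+H(z)}$ that the verifier forms in step~2 are genuine commitments to the signed values, so the proofs of knowledge of a signature (one per ``verify'' edge of Figure~\ref{picture}) accept by completeness of the underlying PoK-of-signature protocol. The revealed pair $(C_1,r_{C_1})$ lets the verifier confirm that $C_1$ commits to $1$, after which the equality proof for $I_x+1=I_z$ and the range proofs $p_1=T_z-T_y>0$ and $p_2=T_y-T_x>0$ all accept, so $V$ outputs $1$.

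For zero-knowledge I would build a simulator $S$ by composing the simulators of the sub-protocols. The verifier's view consists of the list of commitments, the transcripts of the proofs of knowledge of a signature, the equality proof, the two range proofs, and the opened commitment $C_1$. Since the commitment scheme is hiding, $S$ replaces every committed value by a commitment to an arbitrary message; since each sub-protocol is itself zero-knowledge, $S$ invokes their simulators for the corresponding transcripts; and $S$ honestly generates $C_1=\mathrm{commit}(1;r_{C_1})$, which carries no secret. A hybrid argument over these components then shows the simulated transcript is indistinguishable from a real one, using hiding for the commitment hybrids and the zero-knowledge property of each sub-protocol for the remaining hybrids.

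Soundness is the crux and is where I expect the real difficulty. I would argue by reduction from a PPT adversary $A$ winning $\textit{ProofExcl}$ with non-negligible probability. First, applying the extractor $E$ of Definition~\ref{def:sigcom} to each proof of knowledge of a signature yields, with all but negligible loss, the committed messages and their signatures: the hashes, the index-plus-hash sums, and the timestamp-plus-hash sums for $x$ and $z$, together with $T_y+H(y)$. Existential unforgeability of $\sigma_k$ then forces every extracted pair to be one the log actually issued—else a forger $B$ running $A$ while emulating $L$ outputs the forgery—so each extracted value is a genuine entry of \emph{ILog}, \emph{hashLog}, or \emph{TLog}, say $v_{I,x}=I_a+H(a)$, $v_{H,x}=H(b)$, and so on. Writing the reported index as $I_x=v_{I,x}-v_{H,x}=I_a+\bigl(H(a)-H(b)\bigr)$ and similarly for $I_z,T_x,T_y,T_z$, the verifier's accepting check $I_x+1=I_z$ yields the integer identity
\begin{equation}
\bigl(H(a)-H(b)\bigr)-\bigl(H(a')-H(b')\bigr)=I_{a'}-I_a-1 .
\end{equation}
Because the log is well-formed, all indices lie in a range of size $\textit{poly}(\lambda)$, so the right-hand side has magnitude below the near-collision threshold $\delta=\textit{poly}(\lambda)$, and the analogous timestamp identities behave the same way. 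The key step is to argue that any nonzero hash offset here exhibits two distinct inputs whose hashes differ by less than $\delta$, contradicting $\delta$-near collision resistance; this forces every offset to vanish, so $a=b$, $a'=b'$, etc., making $I_x,I_z$ the genuine indices of two genuinely adjacent entries and $T_x,T_y,T_z$ their genuine timestamps. But then $T_x<T_y<T_z$ with $x,z$ adjacent contradicts well-formedness, since no genuine timestamp lies strictly between two adjacent ones, so $A$ wins only with negligible probability.

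The main obstacle is precisely this near-collision step in the case where two offsets are simultaneously nonzero yet their \emph{difference} is small; here one must set up the reduction so that a winning transcript directly exhibits a colliding or near-colliding input pair that $B$ can output. This is the same subtlety that the alternative construction $\Pi'$ sidesteps by concatenating rather than summing, forcing the analogous discrepancy $d$ to equal $0$. I would handle it by bounding each relevant quantity against $\delta$ and arguing that, unless all offsets are zero, the extracted inputs constitute a near-collision, completing the contradiction.
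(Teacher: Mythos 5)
Your overall architecture is exactly the paper's: completeness by tracing an honest run; zero-knowledge by composing the hiding property of the commitments with the simulators of the sub-protocols (the paper's simulator does precisely what you describe, including the honest commit/reveal of $C_1$); and soundness by first running the extractor of Definition~\ref{def:sigcom} on every proof of knowledge of a signature and then invoking existential unforgeability to force every extracted message--signature pair to be a genuine entry of hashLog, ILog, or TLog. These last two steps are literally the paper's hybrids~1 and~2 in Appendix~\ref{ZKApp}, so up to the final arithmetic your proposal matches the paper's proof.

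The genuine gap is in the endgame you yourself flag, and your proposed remedy does not close it. By combining the $x$- and $z$-relations into the single identity $\bigl(H(a)-H(b)\bigr)-\bigl(H(a')-H(b')\bigr)=I_{a'}-I_a-1$, you bound only the \emph{difference} of the two hash offsets. In the case where both offsets are nonzero, what an accepting transcript exhibits is a near four-sum among $H(a),H(b),H(a'),H(b')$, and $\delta$-near collision resistance---a statement about a single pair of inputs---yields no reduction: each offset can be of size on the order of $2^{\lambda}$ while their difference stays polynomial, and then no pair among $a,b,a',b'$ is a near-collision. Your plan to ``bound each relevant quantity against $\delta$'' is exactly what is unavailable under this decomposition, since nothing in $\Pi$ range-checks the individual offsets. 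The paper never faces this case because it argues slot by slot: for the $x$-slot it writes the cheat as log entries $x',x'',x^*$ with $I_{x''}+H(x')=I_{x^*}+H(x^*)$, sets $d=I_{x''}-I_{x^*}$, and bounds $|d|\le\textit{poly}(\lambda)$ because both are indices of a log with only polynomially many entries (this is where the well-formedness conditions---indices starting at zero and incrementing by one, timestamps confined to a $\textit{poly}(\lambda)$ range---do their work). Then $H(x^*)=H(x')+d$ with $x'\neq x^*$ is already a single near-colliding pair, so each slot independently yields either a zero offset or a break of near-collision resistance, and the cross-slot four-sum case never arises. Note that the paper buys this by reading the committed index as the index of \emph{some} log entry $x''$, an assumption your more agnostic decomposition declines to make; but without some such per-slot bound, near-collision resistance alone cannot finish the proof, which is the substantive hole in your sketch.

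A smaller but concrete issue: you treat the extracted $T_y$ as a genuine SCT timestamp while listing only the signed sum $T_y+H(y)$ among the extracted values. With the sum alone, an adversary may split a signed value $t+H(s)$ into an arbitrary pair $T_y',H'$ and thereby claim any timestamp it likes for $y$. It is the additional signed hash $\sigma_{H(y)}$, which the formal model in Appendix~\ref{ZKApp} explicitly gives the prover, that anchors $T_y$ to a genuine entry; your timestamp argument needs to extract and use that signature as well.
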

\begin{proof}
\textbf{Completeness}. The completeness of the proof follows directly from the construction. Any prover with access to an SCT $y$ with the given properties that has access to the log can follow the protocol and convince a verifier that the log is cheating.\\
 
\textbf{Soundness}. We prove that no adversary $A$ can win the security game $\textit{ProofExcl}$ with greater than negligible probability through a series of three hybrids:
\begin{itemize}
\item \emph{hybrid 0:} The security game $\textit{ProofExcl}_{\Pi,A,V,L}(\lambda)$.
\item \emph{hybrid 1:} Same as hybrid 0, except after $V$ outputs $b$, we run the extractor for each proof of knowledge of a signature and output 0 if any of the extractions fail. This is indistinguishable from hybrid 0 by the extractability property of the proof of knowledge of a signature. Since the proof of knowledge protocol has an efficient extractor, we will never fail to extract the signatures used.
\item \emph{hybrid 2:} Same as hybrid 1, except after the verification of the extractions, the verifier checks the extracted signatures and outputs 0 if any of them are not found in the corresponding list held by $L$: either Log, hashLog, ILog, or TLog.  The adversary's advantage is at most negligibly changed from hybrid 1 due to the existential unforgeability of the signature scheme. If $V$ could ever output zero based on this additional check, we could construct a new adversary $B$ that breaks the existential unforgeability of the signature scheme by running the protocol until $V$ outputs 0 on this check and then outputting the signature $s$ that was not found in the log. Since all signatures made by $L$ are in one of the lists, $s$ must be a forged signature. Thus the additional check in this hybrid will only change the output of $V$ from hybrid 1 with negligible probability. 
\end{itemize}
Now we prove that no adversary $A$ can win hybrid game 2 with more than negligible probability. Note that we only need to prove the soundness for the first part of the protocol, as soundness for the second part of the protocol follows from the soundness of the various zero knowledge proofs it uses as subroutines. That is, we need only show that no adversary $A$ can fake the indexes or timestamps of signed entries from $L$. We show here the proof for indexes; the proof for timestamps is almost identical. 

In order for an adversary $A$ to fraudulently win the game, it must find log entries $x'$, $x''$, and $x^*$ (or $z'$, $z''$, and $z^*$) such that it can add $I_{x''}$ and $H(x')$ to get a value equal to $I_{x^*}+H(x^*)$. Since a signature on the latter value will be in ILog, this will allow $A$ to submit an acceptable proof that uses an index and hash from different log entries.

More formally, any adversary $A$ that wins the game must produce commitments $C_{I_x''+H(x')}$, $C_{I_x^*+H(x^*)}$, $C_{I_z''+H(z')}$, and $C_{I_z^*+H(z^*)}$ that satisfy the following conditions:
\begin{itemize}
\item $I_{x''}+1=I_{z''}$
\item $I_{x^*}+H(x^*)=I_{x''}+H(x')$
\item $I_{z^*}+H(z^*)=I_{z''}+H(z')$
\item $(I_{x''}+H(x'), \textit{Sign}_{k_I}(I_{x''}+H(x'))\in$ ILog
\item $(I_{z''}+H(z'), \textit{Sign}_{k_I}(I_{z''}+H(z'))\in$ ILog
\item $x'\neq x''$ OR $z'\neq z''$
\end{itemize}

We will consider the case where $x'\neq x''$. The case for $z'\neq z''$ is identical. If the adversary succeeds, then it must have found values $I_{x^*}, x^*, I_{x''}, x'$ such that 
\begin{align}I_{x^*}+H(x^*)=I_{x''}+H(x')\end{align}
Define $d=I_{x''}-I_{x*}$. Then we have that 
\begin{align}H(x^*)=H(x')+d \end{align} 
But $d < \textit{poly}(\lambda)$ since $A$ can only send polynomially many messages to $L$ to populate the log. Thus we can define adversary $B$ that breaks the $d$-near collision resistance of $H$ by calling $A$ as a subroutine, using the extractors from the zero knowledge proofs in the second part of the protocol to retrieve the contents of the commitments $A$ used, and outputting $x^*, x'$ as its near collision. This completes the soundess proof. 

\textbf{Zero-Knowledge}. The zero knowledge property is easy to establish as the sequential composition of a series of other zero knowledge proofs. The simulator starts by committing to a series of random values (except the commitment/reveal to 1, which must necessarily be done honestly). After the verifier computes the appropriate sums of commitments, the simulator sequentially executes several copies of the simulator of ``knowledge of a signature'' from Camenisch and Lysyanskaya (\cite{CL02} or \cite{CL04}, depending on the signature scheme used). After the remaining computations on the commitments by the verifier, the simulator finally runs the simulator for the proof of equality of committed values and the two range proofs.
\end{proof}
\end{document}